\newtheorem{theorem}{Theorem}
\newtheorem{claim}{Claim}
\title{Reconstructing Parameters of Spreading Models from Partial Observations}
\author{
  Andrey Y.~Lokhov \\ 
  Center for Nonlinear Studies and Theoretical Division T-4 \\
  Los Alamos National Laboratory, Los Alamos, NM 87545, USA \\
  \texttt{lokhov@lanl.gov} \\
}
\begin{document}

\maketitle

\begin{abstract}
  Spreading processes are often modelled as a stochastic dynamics occurring on top of a given network with edge weights corresponding to the transmission probabilities. Knowledge of veracious transmission probabilities is essential for prediction, optimization, and control of diffusion dynamics. Unfortunately, in most cases the transmission rates are unknown and need to be reconstructed from the spreading data. Moreover, in realistic settings it is impossible to monitor the state of each node at every time, and thus the data is highly incomplete. We introduce an efficient dynamic message-passing algorithm, which is able to reconstruct parameters of the spreading model given only partial information on the activation times of nodes in the network. The method is generalizable to a large class of dynamic models, as well to the case of temporal graphs.  
\end{abstract}

\section{Introduction}


Knowledge of the underlying parameters of the spreading model is crucial for understanding the global properties of the dynamics and for development of effective control strategies for an optimal dissemination or mitigation of diffusion \cite{7393962,lokhov2016optimization}. However, in many realistic settings effective transmission probabilities are not known a priori and need to be recovered from a limited number of realizations of the process. Examples of such situations include spreading of a disease \cite{RevModPhys.87.925}, propagation of information and opinions in a social network \cite{boccaletti2006complex}, correlated infrastructure failures \cite{dobson2007complex}, or activation cascades in biological and neural networks \cite{O'Dea2013}: precise model and parameters, as well as propagation paths are often unknown, and one is left at most with several observed diffusion traces. It can be argued that for many interesting systems, even the functional form of the dynamic model is uncertain. Nevertheless, the reconstruction problem still makes sense in this case: the common approach is to assume some simple and reasonable form of the dynamics, and recover the parameters of the model which explain the data in the most accurate and minimalistic way; this is crucial for understanding the basic mechanisms of the spreading process, as well as for making further predictions without overfitting. For example, if only a small number of samples is available, a few-parameter model should be used.   

In practice, it is very costly or even impossible to record the state of each node at every time step of the dynamics: we might only have access to a subset of nodes, or monitor the state of the system at particular times. For instance, surveys may give some information on the health or awareness of certain individuals, but there is no way to get a detailed account for the whole population; neural avalanches are usually recorded in cortical slices, representing only a small part of the brain; it is costly to deploy measurement devices on each unit of a complex infrastructure system; finally, hidden nodes play an important role in the artificial learning architectures. This is precisely the setting that we address in this article: reconstruction of parameters of a propagation model in the presence of nodes with hidden information, and/or partial information in time. It is not surprising that this challenging problem turns out to be notably harder then its detailed counterpart and requires new algorithms which would be robust with respect to missing observations.   

{\bf Related work.} The inverse problem of network and couplings reconstruction in the dynamic setting has attracted a considerable attention in the past several years. However, most of the existing works are focused on learning the propagation networks under the assumption of availability of full diffusion information. The papers \cite{myers2010convexity, ICML2011Gomez_354, du2012learning, netrapalli2012learning} developed inference methods based on the maximization of the likelihood of the observed cascades, leading to distributed and convex optimization algorithms in the case of continuous and discrete dynamics, principally for the variants of the independent cascade (IC) model \cite{kempe2003maximizing}. These algorithms have been further improved under the sparse recovery framework \cite{daneshmand2014estimating, pouget2015inferring}, particularly efficient for structure learning of treelike networks. A careful rigorous analysis of these likelihood-based and alternative \cite{abrahao2013trace, gripon2013reconstructing} reconstruction algorithms give an estimation of the number of observed cascades required for an exact network recovery with high probability. Precise conditions for the parameters recovery at a given accuracy are still lacking.
The fact that the aforementioned algorithms rely on the fully observed spreading history represents an important limitation in the case of incomplete information. The case of missing time information has been addressed in two recent papers: focusing primarily on tree graphs, \cite{amin2014learning} studied the structure learning problem in which only initial and final spreading states are observed; \cite{sefer2015convex} addressed the network reconstruction problem in the case of partial time snapshots of the network, using relaxation optimization techniques and assuming that full probabilistic trace for each node in the network is available. A standard technique for dealing with incomplete data involves maximizing the likelihood marginalized over the hidden information; for example, this approach has been used in \cite{farajtabar2015back} for identifying the diffusion source. In what follows, we use this method for benchmarking our results. 


{\bf Overview of results.} In this article, we propose a different algorithm, based on recently introduced dynamic message-passing (DMP) equations for cascading processes \cite{karrer2010message,lokhov2015dynamic}, which will be referred to as \textsc{DMPrec} (DMP-reconstruction) throughout the text. Making use of all available information, it yields significantly more accurate reconstruction results, outperforming the likelihood method and having a substantially lower algorithmic complexity, independent on the number of nodes with unobserved information. More generally, the \textsc{DMPrec} framework can be easily adapted to allow reconstruction of heterogeneous transmission probabilities in a large class of cascading processes, including the IC and threshold models, SIR and other epidemiological models, rumor spreading dynamics, etc., as well as for the processes occurring on dynamically-changing networks.

\vspace{-0.1cm}

\section{Problem formulation}
\label{Formulation}

{\bf Model.} For the sake of simplicity and definiteness, we assume that cascades follow the dynamics of stochastic susceptible-infected (SI) model in discrete time, defined on a network $G=(V,E)$ with set of nodes denoted by $V$ and set of directed edges $E$ \cite{RevModPhys.87.925}. Each node $i \in V$ at times $t=1,2,\ldots,T$ can be in either of two states: susceptible ($S$) or infected ($I$). At each time step, node $i$ in the $I$ state can activate one of its susceptible neighbors $j$ with probability $\alpha_{ij}$. The dynamics is non-recurrent: once the node is activated (infected), it can never change its state back to susceptible. In what follows, the network $G$ is supposed to be known.

{\bf Incomplete observations and inference problem.} We assume that the input is formed from $M$ independent cascades, where a cascade $\Sigma^{c}$ is defined as a collection of activation times of nodes in the network $\{\tau_{i}^{c}\}_{i \in V}$. Each cascade is observed up to the final observation time $T$. Notice that $T$ is an important parameter: intuitively, the larger is $T$, the more information is contained in cascades, and the less samples are needed. We assume that $T$ is given and fixed, being related to the availability of the finite-time observation window. If node $i$ in cascade $c$ does not get activated at a certain time prior to the horizon $T$, we put by definition $\tau_{i}^{c} = T$; hence, $\tau_{i}^{c} = T$ means that node $i$ changes its state at time $T$ or later. The full information on the cascades $\Sigma = \cup_{c} \Sigma^{c}$ is divided into the observed part, $\Sigma_{\mathcal{O}}$, and the hidden part $\Sigma_{\mathcal{H}}$. Thus, in general $\Sigma_{\mathcal{O}}$ contains only a subset of activation times in $\mathcal{T} \in [0,T]$ for a part of observed nodes in the network $\mathcal{O} \in V$. The task is to reconstruct the couplings $\{\alpha^{*}_{ij}\}_{(ij) \in E} \equiv {G}_{\alpha^{*}}$, where ${G}_{\alpha^{*}}$ with a star denotes the original transmission probabilities that have been used to generate the data.

{\bf Maximization of the likelihood.} Similarly to the formulations considered in \cite{myers2010convexity, ICML2011Gomez_354, netrapalli2012learning}, it is possible to explicitly write the expression for the likelihood of the discrete-time SI model in the  case of fully available information $\Sigma_{\mathcal{O}}=\Sigma$ under the assumption that the data has been generated using the couplings ${G}_{\alpha}$:
\begin{equation}
P(\Sigma \mid {G}_{\alpha})=\prod_{i \in V} \prod_{1 \leq c \leq M} P_{i}(\tau_{i}^{c} \mid \Sigma^{c}, {G}_{\alpha}),
\label{eq:factorization_likelihood}
\end{equation}
with
\vspace{-0.3cm}
\begin{equation}
P_{i}(\tau_{i}^{c} \mid \Sigma^{c},  {G}_{\alpha})
=\left(\prod_{t'=0}^{\tau_{i}^{c}-2}\prod_{k \in \partial i}(1-\alpha_{ki}\mathds{1}_{\tau_{k}^{c} \leq t'})\right)\left[1-\left(\prod_{k \in \partial i}(1-\alpha_{ki}\mathds{1}_{\tau_{k}^{c} \leq \tau_{i}^{c}-1})\right)\mathds{1}_{\tau_{i}^{c}<T} \right],
\label{eq:local_likelihood_discrete}
\end{equation}
where $\partial i$ denotes the set of neighbors of node $i$ in the graph $G$, and $\mathds{1}$ is the indicator function. The expression \eqref{eq:local_likelihood_discrete} has the following meaning: the probability that node $i$ has been activated at time $\tau_{i}$ given the activation times of its neighbors is equal to the probability that the activation signal has not been transmitted by any infected neighbor of $i$ until the time $\tau_{i}-2$ (first term in the product), and that at least one of the active neighbors actually transmitted the infection at time $\tau_{i}-1$ (second term). A straightforward adaptation of the \textsc{NetRate} algorithm, suggested in \cite{ICML2011Gomez_354}, to the present setting implies that the estimation of the transmission probabilities ${\widehat{G}_{\alpha^{*}}}$ is obtained as a solution of the convex optimization problem
\begin{equation}
{\widehat{G}_{\alpha^{*}}}=\arg\min \left( -\ln P(\Sigma \mid {G}_{\alpha}) \right),
\label{eq:MLE_optimization}
\end{equation}
which can be solved locally for each node $i$ and its neighborhood due to the factorization of the likelihood \eqref{eq:factorization_likelihood} under assumption of asymmetry of the couplings. In the case of partial observations, the optimization problem \eqref{eq:MLE_optimization} is not well defined since it requires the full knowledge of activation times for each node. A simple and natural extension of this scheme, which we will refer to as the maximum likelihood estimator (\textsc{MLE}), is to consider the likelihood function marginalized over unknown activation times:
\begin{equation}
P(\Sigma_{\mathcal{O}} \mid {G}_{\alpha})=\sum_{\{\tau_{h}^{c}\}, h \in \mathcal{H}}P(\Sigma \mid {G}_{\alpha}).
\label{eq:reduced_likelihood}
\end{equation}
An exact evaluation of \eqref{eq:reduced_likelihood} is a computationally hard high-dimensional integration problem with complexity proportional to $T^{H}$ in the presence of $H$ nodes with hidden information. In order to correct for this fact, we propose a heuristic scheme which we denote as the heuristic two-stage (\textsc{HTS}) algorithm. The idea of \textsc{HTS} consists of completing the missing part $\{\tau^{c}_{h}\}_{h \in \mathcal{H}}$ of the cascades at each step of the optimization process with the most probable values according to the current estimation of the couplings ${\widehat{G}_{\alpha}}$, $\widehat{\Sigma}_{\mathcal{H}}=\arg\max P(\Sigma \mid {\widehat{G}_{\alpha}})$, and solving the optimization problem \eqref{eq:MLE_optimization} using the full information on the cascades $\Sigma=\Sigma_{\mathcal{O}}\cup\widehat{\Sigma}_{\mathcal{H}}$; these two alternating steps are iterated until the global convergence of the algorithm. An exact (brute-force) estimation of $\widehat{\Sigma}_{\mathcal{H}}$ requires an exponential number of operations $T^{H}$, as the original MLE formulation. However, we found that in practice the computational time can be significantly reduced with the use of the Monte Carlo sampling. The corresponding approximation is based on the observation that the likelihood \eqref{eq:factorization_likelihood} is non-zero only for $\{\tau^{c}_{i}\}_{i \in V}$ forming possible (realizable) cascades. Hence, for each $c$, we sample $L_{H,T}$ auxiliary cascades, and choose the set of $\{\tau^{c}_{h}\}_{h \in \mathcal{H}}$ maximizing \eqref{eq:factorization_likelihood}. $L_{H,T}$ is typically a large sampling parameter, growing with $T$ and $H$ to ensure a proper convergence.
This procedure leads to an algorithm with a complexity $O(NM\vert E\vert^{2}L_{H,T})$ at each step of the optimization, where $\vert E\vert$ denotes the number of edges; see the journal version of the paper \cite{lokhov2015efficient} for a more in-depth discussion.      


Hence, both \textsc{MLE} and \textsc{HTS} algorithms are practically intractable; the remaining part of the paper is devoted to the development of an accurate algorithm with a polynomial-time computational complexity for this hard problem. The next section introduces dynamic message-passing equations which serve as a basis for such algorithm.

\vspace{-0.1cm}

\section{Dynamic message-passing equations.}

The dynamic message-passing equations for the SI model in continuous \cite{karrer2010message} and discrete \cite{lokhov2015dynamic} settings allow to compute marginal probabilities that node $i$ is in the state $S$ at time $t$:
\begin{equation}
P^{i}_{S}(t)=P^{i}_{S}(0)\prod_{k\in \partial i}\theta^{k \rightarrow i}(t)
\label{eq:P_S}
\end{equation}
for $t>0$ and a given initial condition $P^{i}_{S}(0)$. The variables $\theta^{k \rightarrow i}(t)$ represent the probability that node $k$ did not pass the activation signal to the node $i$ until time $t$. The intuition behind the key Equation \eqref{eq:P_S} is that the probability of node $i$ to be susceptible at time $t$ is equal to the probability of being in the $S$ state at initial time times the probability that neither of its neighbors infected it until time $t$. The quantities $\theta^{k \rightarrow i}(t)$ can be computed iteratively using the following expressions:
\begin{align}
& \theta^{k \rightarrow i}(t)=\theta^{k \rightarrow i}(t-1)
-\alpha_{ki}\phi^{k \rightarrow i}(t-1),\label{eq:SIequations:theta}
\\
&\phi^{k \rightarrow i}(t)=(1-\alpha_{ki})\phi^{k \rightarrow i}(t-1)
+P_{S}^{k}(0)\left(\prod_{l\in \partial k \backslash i}\theta^{l \rightarrow k}(t-1)-\prod_{l\in \partial k \backslash i}\theta^{l \rightarrow k}(t)\right),
\label{eq:SIequations:phi}
\end{align}
where $\partial k \backslash i$ denotes the set of neighbors of $k$ excluding $i$. The Equation \eqref{eq:SIequations:theta} translates the fact that $\theta^{k \rightarrow i}(t)$ can only decrease if the infection is actually transmitted along the directed link $(ki) \in E$; this happens with probability $\alpha_{ki}$ times $\phi^{k \rightarrow i}(t-1)$ which denotes the probability that node $k$ is in the state $I$ at time $t$, but has not transmitted the infection to node $i$ until time $t-1$. The Equation \eqref{eq:SIequations:phi}, which allows to close the system of dynamic equations, describes the evolution of probability $\phi^{k \rightarrow i}(t)$: at time $t-1$, it decreases if  the infection is transmitted (first term in the sum), and increases if node $k$ goes from the state $S$ to the state $I$ (difference of terms 2 and 3). Note that node $i$ is excluded from the corresponding products over $\theta$-variables because this equation is conditioned on the fact that $i$ is in the state $S$, and therefore can not infect $k$. The Equations \eqref{eq:SIequations:theta} and \eqref{eq:SIequations:phi} are iterated in time starting from initial conditions $\theta^{i \rightarrow j}(0)=1$ and $\phi^{i \rightarrow j}(0)=1-P_{S}^{i}(0)$ which are consistent with the definitions above. The name ``DMP equations'' comes from the fact the whole scheme can be interpreted as the procedure of passing ``messages'' along the edges of the network.

\begin{theorem}
DMP equations for the SI model, defined by Equations \eqref{eq:P_S}-\eqref{eq:SIequations:phi}, yield exact marginal probabilities on tree networks. On general networks, the quantities $P^{i}_{S}(t)$ give lower bound on values of marginal probabilities.
\end{theorem}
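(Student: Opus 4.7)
The proof splits into two parts. \textbf{Exactness on trees.} I would give a probabilistic interpretation of the DMP variables and proceed by induction on time $t$. Define the cavity process associated with a directed edge $(i,k)$ as the SI dynamics in which node $i$ is artificially held in state $S$ for all times. Then $\theta^{k\to i}(t)$ is to be read as the probability that, in this cavity process, edge $(k,i)$ has not transmitted by time $t$, and $\phi^{k\to i}(t)$ as the probability that, in the cavity process, $k$ is in state $I$ at time $t$ but has not transmitted to $i$ by time $t$. The product formula \eqref{eq:P_S} then follows on a tree from the crucial observation that deleting $i$ disconnects the neighbors' subtrees, so the non-transmission events across different $k\in\partial i$ are mutually independent when conditioned on $i\in S$ throughout. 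The recursion \eqref{eq:SIequations:theta} is immediate by telescoping ($\theta^{k\to i}$ decreases only when the edge fires, with probability $\alpha_{ki}\phi^{k\to i}(t-1)$), and \eqref{eq:SIequations:phi} is obtained by splitting $\phi^{k\to i}(t)$ at time $t-1$: either $k$ was already in the $\phi$-configuration and did not fire, or $k$ transitioned from $S$ to $I$ exactly at step $t-1$, which by the tree cavity applied at $k$ contributes $P^k_S(0)\bigl(\prod_{l\in\partial k\setminus i}\theta^{l\to k}(t-1)-\prod_{l\in\partial k\setminus i}\theta^{l\to k}(t)\bigr)$. Induction on $t$ closes the derivation.

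\textbf{Lower bound on loopy graphs.} I would rewrite the SI process as first-passage percolation: for every directed edge $(k,\ell)$ draw an independent geometric ``activation delay'' $W_{k\ell}$ with parameter $\alpha_{k\ell}$; then the activation time of $i$ equals $\tau_i=\min_{\pi:s\to i}\sum_{e\in\pi}W_e$ over all directed paths $\pi$ from an initially infected source $s$ to $i$. Hence
\begin{equation*}
P_S^i(t)=\Pr\Bigl[\bigcap_{\pi}\{|\pi|>t\}\Bigr].
\end{equation*}
Each event $\{|\pi|>t\}$ is a decreasing function of the independent family $\{W_e\}$, so the family is positively associated by FKG and the joint probability is bounded below by the product of marginal probabilities. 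The product of marginals is exactly what DMP computes: by Part 1 the recursions \eqref{eq:SIequations:theta}--\eqref{eq:SIequations:phi} are exact on the non-backtracking computation tree $\widetilde T_i$ rooted at $i$, in which distinct root-to-$i$ walks use disjoint copies of edges and their path weights are independent.

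\textbf{Main obstacle.} The delicate step is identifying the DMP output on the loopy graph $G$ with the exact marginal on the tree $\widetilde T_i$: one must verify that iterating \eqref{eq:SIequations:theta}--\eqref{eq:SIequations:phi} on $G$ produces numerically the same values as the corresponding tree recursions on $\widetilde T_i$, which ultimately reduces to the fact that the DMP messages depend only on the local non-backtracking structure and are blind to loop closure. Once this equivalence is in place, FKG yields the desired inequality $P_S^{i,\mathrm{DMP}}(t)\leq P_S^{i,\mathrm{true}}(t)$ and the theorem follows.
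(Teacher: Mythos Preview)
Your tree argument is correct and more self-contained than the paper's, which merely observes that the DMP recursions are a special case of belief propagation on time trajectories and therefore exact on trees; the direct cavity interpretation with induction on $t$ is the cleaner route and buys you a proof that does not rely on that external machinery.

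The lower-bound argument has a real gap. From FKG you obtain
\[
P_S^{i,\mathrm{true}}(t)=\Pr\Bigl[\bigcap_{\pi}\{|\pi|>t\}\Bigr]\ \ge\ \prod_{\pi}\Pr\bigl[|\pi|>t\bigr],
\]
but the right-hand side is \emph{not} what DMP computes. Your assertion that in the non-backtracking computation tree $\widetilde T_i$ ``distinct root-to-$i$ walks use disjoint copies of edges and their path weights are independent'' is false: any two source-to-root paths in $\widetilde T_i$ share every tree edge from their branching point up to the root. Hence $P_S^{i,\mathrm{DMP}}(t)$ is itself a joint (not product) probability on $\widetilde T_i$, and applying FKG on the tree would give an inequality in the \emph{same} direction as the one you wrote for $G$; two lower bounds on a common product cannot be chained into $P_S^{i,\mathrm{DMP}}\le P_S^{i,\mathrm{true}}$. (Minor slip: the events $\{|\pi|>t\}$ are increasing in $\{W_e\}$, not decreasing, though this does not affect the FKG conclusion.)

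What actually closes the argument---and what the paper's ``counting argument'' borrowed from Karrer--Newman amounts to---is a direct comparison of $G$ and $\widetilde T_i$ rather than a common product lower bound. One couples the dynamics by letting every tree copy of an edge $e$ use the \emph{same} transmission coin flips as $e$ in $G$; the first-infection chain reaching $i$ in $G$ is necessarily non-backtracking and therefore lifts verbatim to a path in $\widetilde T_i$ infecting the root no later, so the coupled tree has $P_S$ at most $P_S^{i,\mathrm{true}}$. The remaining step is to decorrelate the edge copies one at a time and check that each such replacement cannot raise the non-infection probability; this is where positive association genuinely enters, but it is applied locally and inductively, not as a single global product bound. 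Replace your ``product of marginals'' step with this coupling-plus-decorrelation argument and the rest of your outline goes through.
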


\begin{xsketch}
The exactness of solution on tree graphs immediately follows from the fact that the DMP equations can be derived from belief propagation equations on time trajectories \cite{lokhov2015dynamic}, which provide exact marginals on trees. The fact that $P^{i}_{S}(t)$ computed according to \eqref{eq:P_S} represent a lower bound on marginal probabilities in general networks can be derived from a counting argument, considering multiple infection paths on a loopy graph which contribute to the computation of $P^{i}_{S}(t)$, effectively lowering its value through the Equation \eqref{eq:P_S}; the proof technique is borrowed from \cite{karrer2010message}, where similar dynamic equations in the continuous-time case have been considered. \hfill\ensuremath{\square}
\end{xsketch}

Using the definition \eqref{eq:P_S} of $P^{i}_{S}(t)$, it is convenient to define the marginal probability $m^{i}(t)$ of activation of node $i$ at time $t$:
\vspace{-0.3cm}
\begin{equation}
m^{i}(t)=P^{i}_{S}(0)\left[\prod_{k\in \partial i}\theta^{k \rightarrow i}(t-1)-\prod_{k\in \partial i}\theta^{k \rightarrow i}(t)\right].
\label{eq:marginals}
\end{equation}
As it often happens with message-passing algorithms, although being exact only on tree networks, DMP equations provide accurate results even on loopy networks. An example is provided in the Figure \ref{fig:DMP_accuracy}, where the DMP-predicted marginals are compared with the values obtained from extensive simulations of the dynamics on a network of retweets with $N=96$ nodes \cite{graphrepository2013}. This observation will allow us to use DMP equations as a suitable approximation tool on general networks. In the next section we describe an efficient reconstruction algorithm, \textsc{DMPrec}, which is based on the resolution of the dynamics given by DMP equations and makes use of all available information. 

\vspace{-0.1cm}

\begin{figure}[ht]
\vskip 0.2in
\begin{center}
\centerline{\includegraphics[width=0.66\columnwidth]{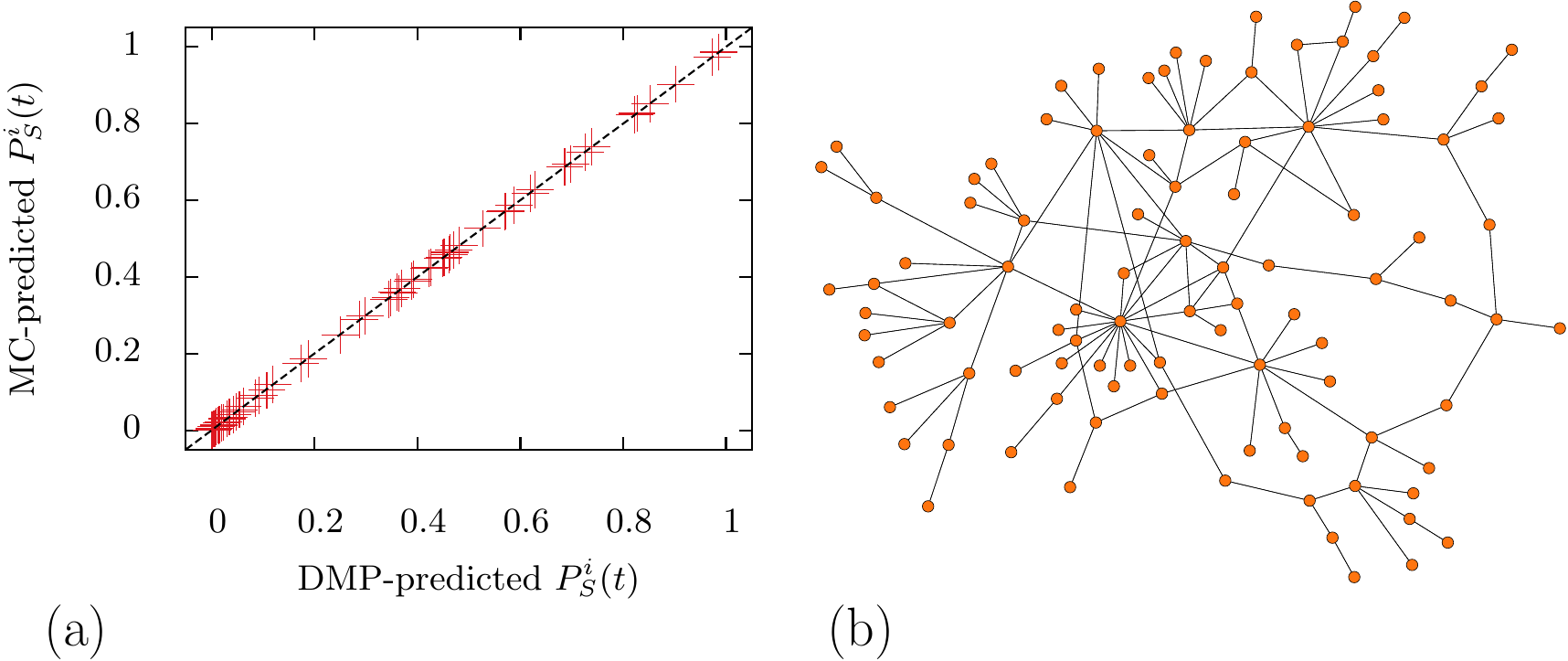}}
\caption{\small{Illustration of the accuracy of DMP equations on a network of retweets with $N=96$ nodes \cite{graphrepository2013}. (a) Comparison of DMP-predicted $P^{i}_{S}(t)$ with $P^{i}_{S}(t)$ estimated from $10^{6}$ runs of Monte Carlo simulations with $t=10$ and one infected node at initial time. The couplings $\{\alpha_{ij}\}$ have been generated uniformly at random in the range $[0,1]$. (b) Visualization of the network topology created with the Gephi software.}}
\label{fig:DMP_accuracy}
\end{center}
\vskip -0.2in
\end{figure} 

\section{Proposed algorithm: \textsc{DMPrec}}

{\bf Probability of cascades and free energy.} The marginalization over hidden nodes in \eqref{eq:reduced_likelihood} creates a complex relation between couplings in the whole graph, resulting in a non-explicit expression. The main idea behind the \textsc{DMPrec} algorithm is to approximate the likelihood of observed cascades \eqref{eq:reduced_likelihood} through the marginal probability distributions \eqref{eq:P_S} and \eqref{eq:marginals}:
\begin{equation}
P(\Sigma_{\mathcal{O}} \mid {G}_{\alpha}) 
\approx \prod_{c=1}^{M} \prod_{i \in \mathcal{O}} \left[ m^{i}(\tau_{i}^{c} \mid G_{\alpha})\mathds{1}_{\tau_{i}^{c}\leq T} + P^{i}_{S}(\tau_{i}^{c} \mid G_{\alpha})\mathds{1}_{\tau_{i}^{c}=T} \right].
\label{eq:approximated_probability}
\end{equation}
The expression \eqref{eq:approximated_probability} is at the core of the suggested algorithm. As there is no tractable way to compute exactly the joint probability of partial observations, we approximate it using a mean-field-type approach as a product of marginal probabilities provided by the dynamic message-passing equations. The reasoning behind this approach is that each marginal is expressed through an average of all possible realizations of dynamics with a given initial condition; this is in contrast with the likelihood function which considers only particular instance realized in the given cascade. Therefore, equation \eqref{eq:approximated_probability} summarizes the effect of different propagation paths, and the maximization of this probability function will yield the most likely consensus between the ensemble of couplings in the network. 
Precisely this key property makes the reconstruction possible in the case involving nodes with hidden information via maximization of the objective \eqref{eq:approximated_probability} which can be interpreted as a cost function representing the product of individual probabilities of activation taken precisely at the value of the observed infection times. 
Starting from this expression, one can define the associated ``free energy'':
\begin{equation}
f_{\text{DMP}} = - \ln P(\Sigma_{\mathcal{O}} \mid {G}_{\alpha}) = \sum_{i \in \mathcal{O}} f_{\text{DMP}}^{i},
\label{eq:free_energy}
\end{equation}
where $f_{\text{DMP}}^{i} = - \sum_{c} \ln \left[m^{i}(\tau_{i}^{c})\mathds{1}_{\tau_{i}^{c} \leq T-1} + P^{i}_{S}(T-1)\mathds{1}_{\tau_{i}^{c}=T}\right]$.
In the last expression for $f_{\text{DMP}}^{i}$ we used the fact that $m^{i}(T) + P^{i}_{S}(T) = P^{i}_{S}(T-1)$. Our goal is to minimize the free energy \eqref{eq:free_energy} with respect to $\{\alpha_{ij}\}_{(ij) \in E}$.
A similar approach has been previously outlined by \cite{altarelli2014bayesian} as a way to learn homogeneous couplings in the spreading source inference algorithm. In order to carry out this optimization task, we need to develop an efficient way of gradient evaluation.

{\bf Computation of the gradient.} The gradient of the free energy reads (note that the indicator functions point to disjoint events):
\begin{equation}
\frac{\partial f_{\text{DMP}}^{i}}{\partial \alpha_{rs}} 
= - \sum_{c} \Big[ \frac{\partial m^{i}(\tau_{i}^{c} \mid G_{\alpha}) / \partial \alpha_{rs}}{m^{i}(\tau_{i}^{c} \mid G_{\alpha})} \mathds{1}_{\tau_{i}^{c} \leq T-1}
 + \frac{\partial P^{i}_{S}(T-1 \mid G_{\alpha}) / \partial \alpha_{rs}}{P^{i}_{S}(T-1 \mid G_{\alpha})} \mathds{1}_{\tau_{i}^{c}=T} \Big],
\label{eq:derivative_free_energy}
\end{equation}
where the derivatives of the marginal probabilities can be computed explicitly by taking the derivative of the DMP equations \eqref{eq:P_S}-\eqref{eq:marginals}. Let us denote $\partial \theta^{k \rightarrow i}(t) / \partial \alpha_{rs} \equiv p^{k \rightarrow i}_{rs}(t)$ and $\partial \phi^{k \rightarrow i}(t) / \partial \alpha_{rs} \equiv q^{k \rightarrow i}_{rs}(t)$. Since the dynamic messages at initial time $\{\theta^{i \rightarrow j}(0)\}$ and $\{\phi^{i \rightarrow j}(0)\}$ are independent on the couplings, we have $p^{k \rightarrow i}_{rs}(0)=q^{k \rightarrow i}_{rs}(0)=0$ for all $k$, $i$, $r$, $s$, and these quantities can be computed iteratively using the analogues of the Equations \eqref{eq:SIequations:theta} and \eqref{eq:SIequations:phi}:
\begin{align}
p&^{k \rightarrow i}_{rs}(t) = p^{k \rightarrow i}_{rs}(t-1)
-\alpha_{ki}q^{k \rightarrow i}_{rs}(t-1)-\phi^{k \rightarrow i}(t-1)\mathds{1}_{k=r,i=s},\label{eq:SIequations:p}
\\
\notag
q&^{k \rightarrow i}_{rs}(t) = (1-\alpha_{ki})q^{k \rightarrow i}_{rs}(t-1)-\phi^{k \rightarrow i}(t-1)\mathds{1}_{k=r,i=s}
\\
&+P_{S}^{k}(0)\sum_{l\in \partial k \backslash i}p^{l \rightarrow k}_{rs}(t-1) \hspace{-0.39cm} \prod_{n\in \partial k \backslash \{i,l\}}\theta^{n \rightarrow k}(t-1)
-P_{S}^{k}(0)\sum_{l\in k \backslash i}p^{l \rightarrow k}_{rs}(t) \hspace{-0.39cm} \prod_{n\in \partial k \backslash \{i,l\}}\theta^{n \rightarrow k}(t).
\label{eq:SIequations:q}
\end{align}

Using these quantities, the derivatives of the marginals entering in Equation \eqref{eq:derivative_free_energy} can be written as
\begin{equation}
\frac{\partial P^{i}_{S}(t)}{\partial \alpha_{rs}} = P^{i}_{S}(0) \sum_{k\in \partial i}p^{k \rightarrow i}_{rs}(t)\prod_{l\in \partial i \backslash k}\theta^{l \rightarrow i}(t), \quad \quad
\frac{\partial m^{i}(t)}{\partial \alpha_{rs}} = \frac{\partial P^{i}_{S}(t-1)}{\partial \alpha_{rs}} - \frac{\partial P^{i}_{S}(t)}{\partial \alpha_{rs}}. 
\label{eq:derivative_marginal}
\end{equation}

The following observation shows that at least on tree networks, corresponding to the regime in which DMP equations have been derived, the values of the original transmission probabilities $G_{\alpha^{*}}$ correspond to the point in which the gradient of the free energy takes zero value. 
\begin{claim}
On a tree network, in the limit of large number of samples $M \to \infty$, the derivative of the free energy is equal to zero at the values of couplings $G_{\alpha^{*}}$ used for generating cascades.
\end{claim}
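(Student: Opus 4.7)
The plan is to use Theorem~1 together with the law of large numbers to turn the average free energy into a cross-entropy between the true cascade-marginal distribution and the DMP-predicted distribution, and then exploit the normalization of probabilities to show the gradient vanishes at $G_{\alpha^{*}}$.

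First, because the graph is a tree, Theorem~1 tells us that DMP produces exact marginals. Hence, when the data is generated with couplings $G_{\alpha^{*}}$, the quantities $m^{i}(t \mid G_{\alpha^{*}})$ for $0 \leq t \leq T-1$ together with $P^{i}_{S}(T-1 \mid G_{\alpha^{*}})$ are precisely the true probabilities of the events $\{\tau_{i}=t\}$ and $\{\tau_{i}=T\}$, respectively. These probabilities form a valid distribution: the telescoping identity $m^{i}(t) = P^{i}_{S}(t-1) - P^{i}_{S}(t)$ together with $m^{i}(0)=1-P^{i}_{S}(0)$ gives
\begin{equation*}
\sum_{t=0}^{T-1} m^{i}(t \mid G_{\alpha}) + P^{i}_{S}(T-1 \mid G_{\alpha}) = 1
\end{equation*}
for \emph{every} value of the couplings $G_{\alpha}$.

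Second, by the strong law of large numbers applied cascade by cascade, the empirical frequencies $\tfrac{1}{M}\sum_{c} \mathds{1}_{\tau_{i}^{c}=t}$ converge almost surely to $m^{i}(t \mid G_{\alpha^{*}})$ for $t \leq T-1$, and $\tfrac{1}{M}\sum_{c} \mathds{1}_{\tau_{i}^{c}=T}$ converges to $P^{i}_{S}(T-1 \mid G_{\alpha^{*}})$. Consequently $\tfrac{1}{M} f_{\text{DMP}}^{i}(G_{\alpha})$ converges to the cross-entropy
\begin{equation*}
H_{i}(G_{\alpha}) = -\sum_{t=0}^{T-1} m^{i}(t \mid G_{\alpha^{*}}) \ln m^{i}(t \mid G_{\alpha}) - P^{i}_{S}(T-1 \mid G_{\alpha^{*}}) \ln P^{i}_{S}(T-1 \mid G_{\alpha}).
\end{equation*}

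Third, differentiating $H_{i}$ with respect to an arbitrary $\alpha_{rs}$ and evaluating at $G_{\alpha}=G_{\alpha^{*}}$ makes each ratio $m^{i}(t \mid G_{\alpha^{*}})/m^{i}(t \mid G_{\alpha})$ and $P^{i}_{S}(T-1\mid G_{\alpha^{*}})/P^{i}_{S}(T-1\mid G_{\alpha})$ equal to one, leaving
\begin{equation*}
\left.\frac{\partial H_{i}}{\partial \alpha_{rs}}\right|_{G_{\alpha^{*}}} = -\sum_{t=0}^{T-1}\frac{\partial m^{i}(t)}{\partial \alpha_{rs}} - \frac{\partial P^{i}_{S}(T-1)}{\partial \alpha_{rs}}.
\end{equation*}
Differentiating the normalization identity above with respect to $\alpha_{rs}$ shows that this expression is identically zero. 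Summing over $i \in \mathcal{O}$ yields the claim.

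The only subtle step is the exchange of the almost-sure limit with the gradient; this is routine because $f_{\text{DMP}}$ is a finite sum of smooth functions of the couplings and, at $G_{\alpha^{*}}$, the denominators $m^{i}(t \mid G_{\alpha^{*}})$ and $P^{i}_{S}(T-1 \mid G_{\alpha^{*}})$ are strictly positive under mild non-degeneracy of the initial condition. The core conceptual observation is that tree-exactness of DMP promotes minimization of $f_{\text{DMP}}/M$ to minimization of a Kullback--Leibler-type cross-entropy, for which $G_{\alpha^{*}}$ is automatically a stationary point.
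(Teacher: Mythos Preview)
Your proof is correct and follows essentially the same argument as the paper's: use Theorem~1 and the law of large numbers to rewrite the limiting per-node free energy as a cross-entropy, then differentiate the normalization identity $\sum_{t\le T-1} m^{i}(t)+P^{i}_{S}(T-1)=1$ to obtain zero gradient at $G_{\alpha^{*}}$. The paper is marginally more careful on one point you gloss over --- it first groups cascades by their common initial condition before applying the law of large numbers (since the DMP marginals $m^{i}(t\mid G_{\alpha})$ depend on the cascade's initial condition, so your empirical frequency $\tfrac{1}{M}\sum_{c}\mathds{1}_{\tau_{i}^{c}=t}$ does not converge to a single $m^{i}(t\mid G_{\alpha^{*}})$ unless all cascades share one source), and then notes the result carries over by summing across initial-condition groups.
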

\begin{proof}
Let us first look at samples originating from the same initial condition. According to Theorem 1, the DMP equations are exact on tree graphs, and hence it is easy to see that
\begin{align}
\lim_{M \to \infty} & f_{\text{DMP}}^{i} = - \sum_{t \leq T-1} m^{i}(t \mid G_{\alpha^{*}}) \ln m^{i}(t \mid G_{\alpha})- P^{i}_{S}(T-1 \mid G_{\alpha^{*}} ) \ln P^{i}_{S}(T-1 \mid G_{\alpha}).
\end{align}
Therefore,
\begin{equation*}
\lim_{M \to \infty} \frac{\partial f_{\text{DMP}}^{i}}{\partial \alpha_{rs}} \vert_{G_{\alpha^{*}}}  
= - \frac{\partial}{\partial \alpha_{rs}} \Bigg[ \sum_{t \leq T-1} \hspace{-0.2cm} m^{i}(t \mid G_{\alpha^{*}}) + P^{i}_{S}(T-1 \mid G_{\alpha^{*}}) \Bigg] = 0,
\end{equation*}
since the expression inside the brackets sums exactly to one. This result trivially holds by summing up samples with different initial conditions. Combination of this result with the definition \eqref{eq:free_energy} completes the proof. 
\end{proof}
The \textsc{DMPrec} algorithm consists of running the message-passing equations for the derivatives of the dynamic variables \eqref{eq:SIequations:p}, \eqref{eq:SIequations:q} in parallel with DMP equations \eqref{eq:P_S}-\eqref{eq:SIequations:phi}, allowing for the computation of the gradient of the free energy \eqref{eq:derivative_free_energy} through \eqref{eq:derivative_marginal}, which is used afterwards in the optimization procedure. Let us analyse the computational complexity of each step of parameters update. The number of runs is equal to the number of distinct initial conditions in the ensemble of observed cascades, so if all $M$ cascades start with distinct initial conditions, the complexity of the \textsc{DMPrec} algorithm is equal to $O(\vert E\vert^{2} T M)$ for each step of the update of $\{\alpha_{rs}\}_{(rs) \in E}$. Hence, in a typical situation where each cascade is initiated at one particular node, the number of runs will be limited by $N$, and the overall update-step complexity of \textsc{DMPrec} will be $O(\vert E\vert^{2} T N)$.

{\bf Missing information in time.} On top of inaccessible nodes, the state of the network can be monitored at a lower frequency compared to the natural time scale of the dynamics. It is easy to adapt the algorithm to the case of observations at $K$ time steps $\mathcal{T} \equiv \{t^{k}\}_{k \in [1,K]}$. Since the activation time $\tau_{i}^{c}$ of node $i$ in cascade $c$ is now known only up to the interval $[t^{k^{c}_{i}}+1,t^{k^{c}_{i}+1}] \equiv \delta_{k^{c}_{i}}$, where $t^{k^{c}_{i}}<\tau_{i}^{c}\leq t^{k^{c}_{i}+1}$, one should maximize $\sum_{t \in \delta_{k^{c}_{i}}} m^{i}(t) = P^{i}_{S}(t^{k^{c}_{i}}) - P^{i}_{S}(t^{k^{c}_{i}+1}) \equiv \Delta_{k^{c}_{i}} P^{i}_{S}(t \mid G_{\alpha})$ instead of $m^{i}(\tau_{i}^{c})$ in this case. This leads to obvious modifications to the expressions \eqref{eq:free_energy} and \eqref{eq:derivative_free_energy}, using the differences of derivatives at corresponding times instead of one-step differences as in \eqref{eq:derivative_marginal}. For instance, if the final time is not included in the observations, we have
\begin{equation*}
f_{\text{DMP}}^{i} = - \sum_{c} \ln \left[\Delta_{k^{c}_{i}} P^{i}_{S}(t \mid G_{\alpha}) \right], \quad
\frac{\partial f_{\text{DMP}}^{i}}{\partial \alpha_{rs}} 
= - \sum_{c} \left[ \frac{\partial \Delta_{k^{c}_{i}} P^{i}_{S}(t \mid G_{\alpha}) / \partial \alpha_{rs}}{\Delta_{k^{c}_{i}} P^{i}_{S}(t \mid G_{\alpha})} \right].
\end{equation*}

\vspace{-0.2cm}

\section{Numerical results}

We evaluate the performance of the \textsc{DMPrec} algorithm on synthetic and real-world networks under assumption of partial observations. In numerical experiments, we focus primarily on the presence of inaccessible nodes, which is a more computationally difficult case compared to the setting of missing information in time. An example involving partial time observations is shown in section \ref{Synthetic}. 

\subsection{Tests with synthetic data}
\label{Synthetic}

{\bf Experimental setup.} In the tests described in this section, the couplings $\{\alpha_{ij}\}$ are sampled uniformly in the range $[0,1]$, the final observation time is set to $T=10$. Each cascade is generated using a discrete-time SI model defined in section~\ref{Formulation} from randomly selected sources. In the case of inaccessible nodes, the activation times data is hidden in all the samples for $H$ randomly selected nodes. We use the likelihood methods for benchmarking the accuracy of our approach. The \textsc{MLE} algorithm introduced above is not tractable even on small graphs, therefore we compare the results of \textsc{DMPrec} with the \textsc{HTS} algorithm outlined in the section \ref{Formulation}. Still, \textsc{HTS} has a very high computational complexity, and therefore we are bound to run comparative tests on small graphs: a connected component of an artificially-generated network with $N=20$, sampled using a power-law degree distribution, and a real directed network of relationships in a New England monastery with $N=18$ nodes \cite{sampson1969crisis}. Both algorithms are initialized with $\alpha_{ij}=0.5$ for all $(ij) \in E$. The accuracy of reconstruction is assessed using the $\ell_{1}$ norm of the difference between reconstructed and original couplings, normalized over the number of directed edges in the graph\footnote{Note that this measure excludes those few parameters which are impossible to reconstruct: e.g. no algorithm can learn the coupling associated with the ingoing edge of the hidden node located at the leaf of a network.} . Intuitively, this measure gives an average expected error for each parameter $\alpha_{ij}$.

\begin{figure}[!ht]
\begin{center}
\centerline{\includegraphics[width=\columnwidth]{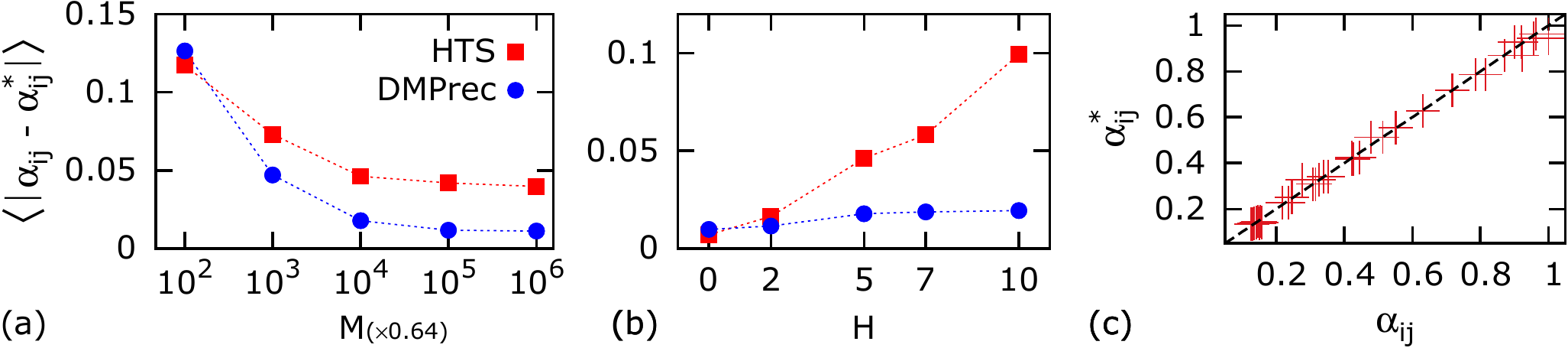}}
\caption{\small{Tests for \textsc{DMPrec} and \textsc{HTS} on a small power-law network: (a) for fixed number of nodes with unobserved information $H=5$, (b) for fixed number of samples $M=6400$. (c) Scatter plot of $\{\alpha_{ij}\}$ obtained with \textsc{DMPrec} versus original parameters $\{\alpha^{*}_{ij}\}$ in the case of missing information in time with $M=6400$, $T=10$; the state of the network is observed every other time step.}}
\vspace{-0.3cm}
\label{fig:Powerlaw}
\end{center}
\end{figure}
\begin{figure}[!ht]
\begin{center}
\centerline{\includegraphics[width=0.88\columnwidth]{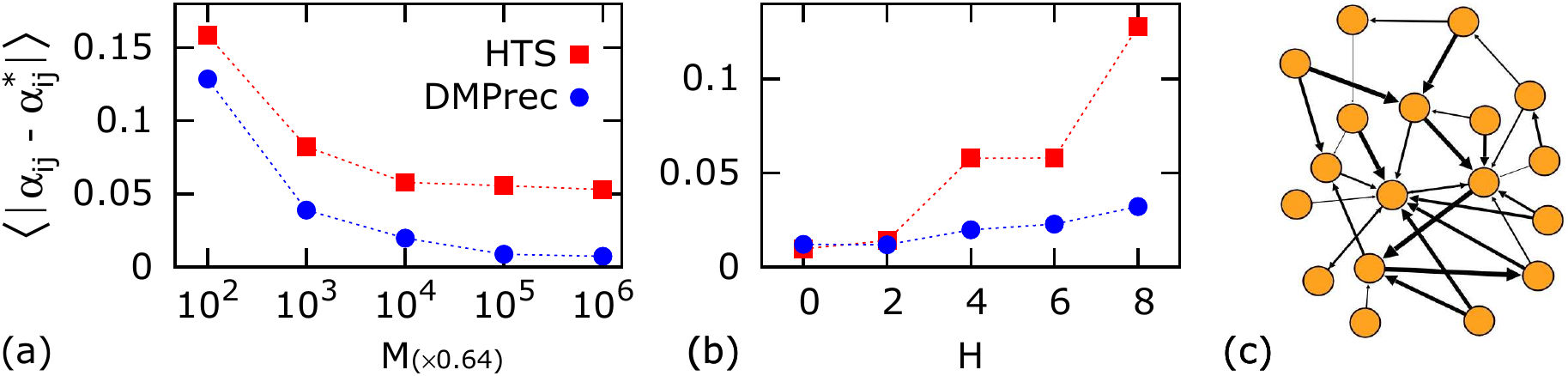}}
\caption{\small{Numerical results for the real-world Monastery network of \cite{sampson1969crisis}: (a) for fixed number of nodes with unobserved information $H=4$, (b) for fixed number of samples $M=6400$. (c) The topology of the network (thickness of edges proportional to $\{\alpha^{*}_{ij}\}$ used for generating cascades).}}
\vspace{-0.4cm}
\label{fig:Monastery}
\end{center}
\end{figure}

{\bf Results.} In the Figure~\ref{fig:Powerlaw} we present results for a small power-law network with short loops, which is not a favorable situation for DMP equations derived in the treelike approximation of the graph. Figures \ref{fig:Powerlaw}~(a) and \ref{fig:Powerlaw}~(b) show the dependence of an average reconstruction error as a function of $M$ (for fixed $H/N=0.25$) and $H$ (for fixed $M=6400$), respectively. \textsc{DMPrec} clearly outperforms the \textsc{HTS} algorithm, yielding surprisingly accurate reconstruction of transmission probabilities even in the case where a half of network nodes do not report any information. Most importantly, \textsc{DMPrec} achieves reconstruction with a significantly lower computational time: for example, while it took more than $24$ hours to compute the point corresponding to $H=4$ and $M=6400$ with \textsc{HTS} (\textsc{MLE} at this test point took several weeks to converge), the computation involving \textsc{DMPrec} converged to the presented level of accuracy in less than $10$ minutes on a standard laptop. These times illustrate the hardness of the learning problem involving incomplete information.

We have also used this case study network to test the estimation of transmission probabilities with the \textsc{DMPrec} algorithm when the state of the network is recorded only at a subset of times $\mathcal{T} \in [0,T]$. Results for the case where every other time stamp is missing are given in the Figure~\ref{fig:Powerlaw}~(c): couplings estimated with \textsc{DMPrec} are compared to the original values $\{\alpha^{*}_{ij}\}$; despite the fact that only $50\%$ of time stamps are available, the inferred couplings show an excellent agreement with the ground truth.   

 
Equivalent results for the real-world relationship network extracted from the study \cite{sampson1969crisis} and containing both directed and undirected links, are shown in the Figure~\ref{fig:Monastery}; an ability of \textsc{DMPrec} to capture the mutual dependencies of different couplings through dynamic correlations is even more pronounced in this case, with almost perfect reconstruction of couplings for large $M$ and a rather weak dependence on the number of nodes with removed observations. We have run tests on larger synthetic networks which show similar reconstruction results for \textsc{DMPrec}, but where comparisons with the likelihood method could not be carried out.
In the next section we focus on an application involving real-world data which represents a more interesting and important case for the validation of the algorithm.

\subsection{Test with a real-world data}
\label{Real}

As a proxy for the real statistics, we used the data provided by the Bureau of Transportation Statistics \cite{bts}, from which we reconstructed a part of the U.S. air transportation network, where airports are the nodes, and directed links correspond to traffic between them. The reason behind this choice is based on the fact that the majority of large-scale influenza pandemics over the past several decades represented the air-traffic mediated epidemics. For illustration purposes, we selected top $N=30$ airports ranked according to the total number of passenger enplanements and commonly classified as large hubs, and extracted a sub-network of flights between them. The weight of each edge is defined by the annual number of transported passengers, aggregated over multiple routes; we have pruned links with a relatively low traffic -- below $10\%$ of the traffic level on the busiest routes, so that the total number of remaining directed links is $\vert E \vert = 210$. The final weights are based on the assumption that the probability of infection transmission is proportional to the flux; the weights have been renormalized accordingly so that the busiest route received the coupling $\alpha_{ij}=0.5$. The resulting network is depicted in the Figure~\ref{fig:Airnetwork} . We have generated $M=10000$ independent cascades in this network, and have hidden the information at $H=15$ nodes ($50 \%$ of airports) selected at random. Notice that due to an underestimation of $\{P^{i}_{S}(t)\}$ by DMP equations in this challenging case of a very loopy graph (see Theorem 1), the couplings have a clear tendency to be slightly overestimated, see Figure~\ref{fig:Airnetwork} . Still, even with a significantly large portion of missing information, the reconstructed parameters show a good agreement with the original ones.  

\begin{figure}[!ht]
\begin{center}
\includegraphics[width=0.48\columnwidth]{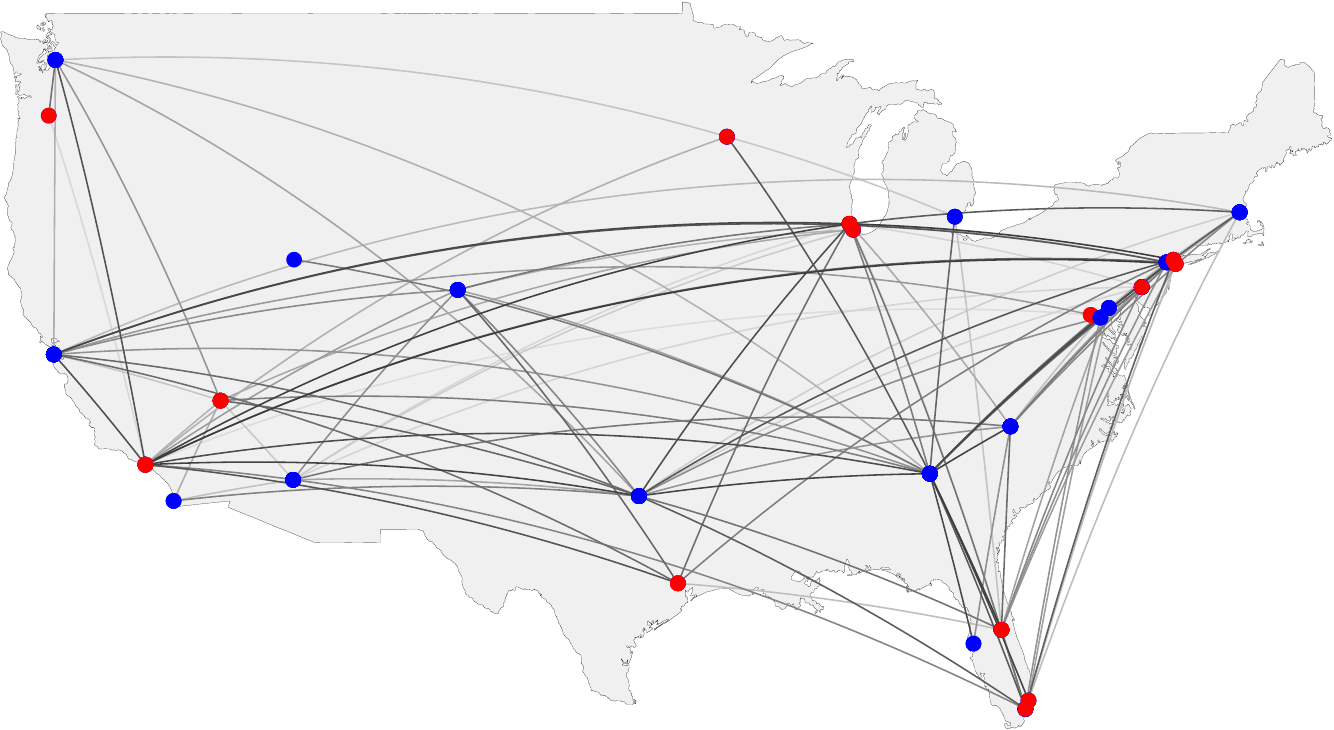}
\includegraphics[width=0.39\columnwidth]{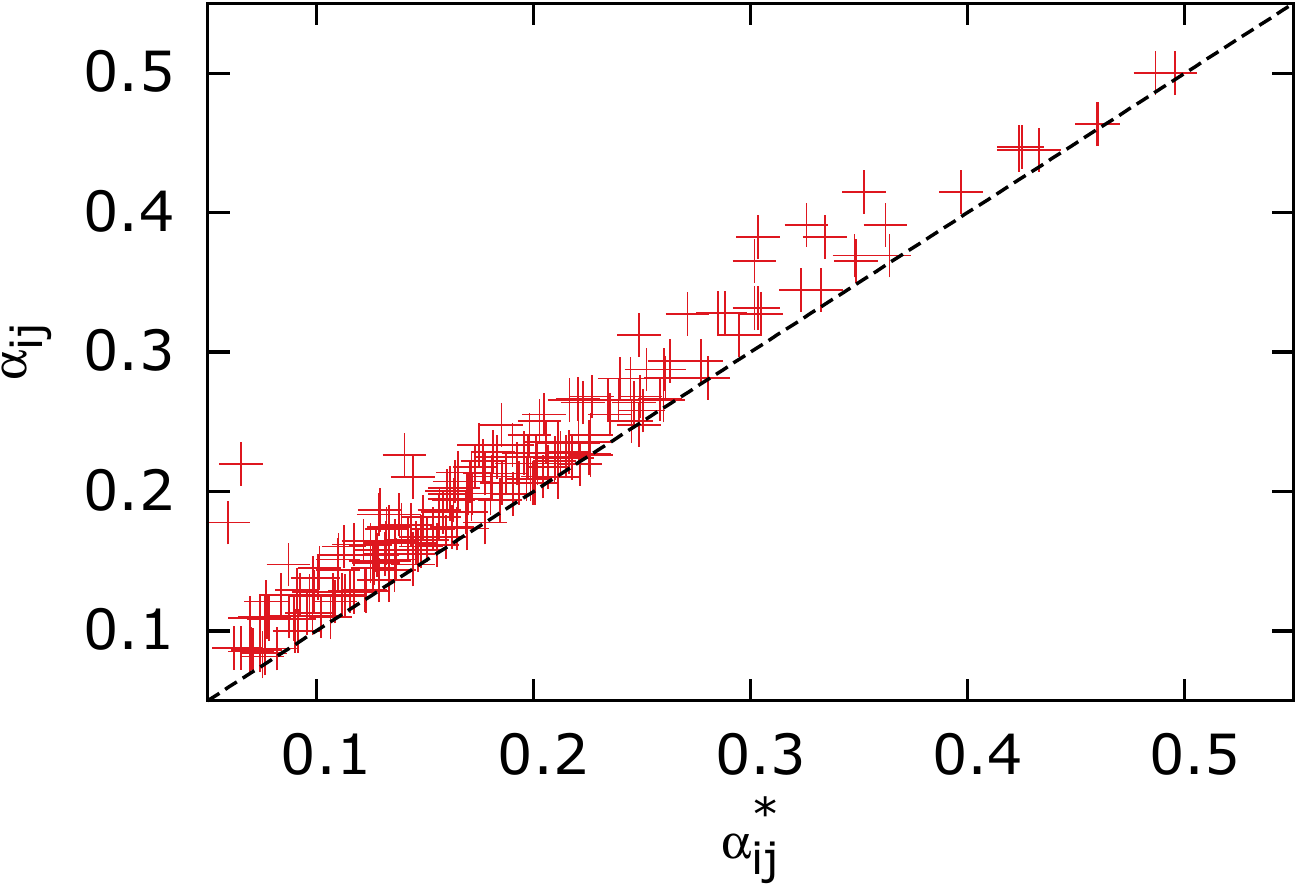}
\caption{\small{Left: Sub-network of flights between major U.S. hubs, where the thickness of edges is proportional to the aggregated traffic between them; nodes which do not report information are indicated in red. Right: Scatter plot of reconstructed $\{\alpha_{ij}\}$ versus original $\{\alpha^{*}_{ij}\}$ couplings.}}
\label{fig:Airnetwork}
\vspace{-0.5cm}
\end{center}
\end{figure} 

\section{Conclusions and path forward}

From the algorithmic point of view, inference of spreading parameters in the presence of nodes with incomplete information considerably complicates the problem because the reconstruction can no longer be performed independently for each neighborhood. In this paper, it is shown how the dynamic interdependence of parameters can be exploited in order to be able to recover the couplings in the setting involving hidden information. Let us discuss several directions for future work. \textsc{DMPrec} can be straightforwardly generalized to more complicated spreading models using a generic form of DMP equations \cite{lokhov2015dynamic} and the key approximation ingredient \eqref{eq:approximated_probability}, as well as adapted to the case of temporal graphs by encoding network dynamics via time-dependent coefficients $\alpha_{ij}(t)$, which might be more appropriate in certain real situations.
It would also be useful to extend the present framework to the case of continuous dynamics using the continuous-time version of DMP equations of \cite{karrer2010message}.
An important direction would be to generalize the learning problem beyond the assumption of a known network, and formulate precise conditions for detection of hidden nodes and for a perfect network recovery in this case. Finally, in the spirit of active learning, we anticipate that \textsc{DMPrec} could be helpful for the problems involving an optimal placement of observes in the situations where collection of full measurements is costly.

\small

{\bf Acknowledgements.} The author is grateful to M.~Chertkov and T.~Misiakiewicz for discussions and comments, and acknowledges support from the LDRD Program at Los Alamos National Laboratory by the National Nuclear Security Administration of the U.S.~Department of Energy under Contract No.~DE-AC52-06NA25396.


\small
\bibliography{example_paper}
\bibliographystyle{myunsrt}
\end{document}